\documentclass{article}
\usepackage{spconf,amsmath,graphicx}

\usepackage{amsthm}
\usepackage{amssymb}
\usepackage{xypic}
\theoremstyle{plain}

\newtheorem{thm}{Theorem}
\newtheorem{prop}[thm]{Proposition}
\newtheorem{cor}[thm]{Corollary}
\newtheorem{lem}[thm]{Lemma}

\theoremstyle{definition}
\newtheorem{df}[thm]{Definition}

\newtheorem{eg}[thm]{Example}

\def\shf{\mathcal}

\def\rank{\textrm{rank }}
\def\image{\textrm{image }}
\def\st{\textrm{star }}
\def\cl{\textrm{cl }}
\def\active{\textrm{active}}
\def\roi{\textrm{roi }}

% Example definitions.
% --------------------

% Title.
% ------
\title{Analyzing wireless communication network vulnerability with homological invariants}
%
% Single address.
% ---------------
\name{Michael Robinson}
\address{Mathematics and Statistics\\
American University\\
Washington, DC, USA\\
michaelr@american.edu}

\begin{document}
\maketitle
\begin{abstract}
This article explains how sheaves and homology theory can be applied to simplicial complex models of wireless communication networks to study their vulnerability to jamming.  It develops two classes of invariants (one local and one global) for studying which nodes and links present more of a liability to the network's performance when under attack.
\end{abstract}
\begin{keywords}
wireless network; simplicial complex; sheaf; persistent homology; relative homology
\end{keywords}
\section{Introduction}
Wireless communication networks are vulnerable to adversarial jamming because they are based on a broadcast medium.  However, the problems of designing attack and defense strategies for jamming have been neglected areas of study \cite{Commander_2007}.  Even if the network's nodes and environment remains fixed, which links present the network with the greatest liability?  We will study two classes of networks: (1) a fixed network in which all nodes communicate by broadcasting into a single channel, or (2) a fixed network of base station nodes that communicate with dedicated mobile terminals using a single broadcast resource.  We specifically do not address wired networks, since the vulnerability of wired networks is better understood.

\section{Contributions}
This article argues that appropriate tools for addressing these problems come from algebraic topology, and makes three main contributions to the study of a wireless communication network's physical vulnerability:
\begin{enumerate} 
\item It develops two simplicial complex models of communication networks.  This specifically generalizes graph-based models and permits more refined, higher-dimensional resource contention to be captured. 
\item On these models, the article outlines a sheaf-theoretic model of network activity and interference, which provides a disciplined way to explain the possible non-interfering configurations of active nodes.
\item Finally, the article pioneers vulnerability assessment using persistent and relative homology.  These two techniques yield novel global and per-node (respectively) vulnerability assessments.
\end{enumerate}

\section{Historical context}
We assume that the proper operation of the network under study requires every node to be able to communicate with every other node, possibly through a relay.  Following \cite{Noubir_2004,Gueye_2010}, we will declare that the network is \emph{vulnerable} at a collection of links if their removal results in a disconnected network.  Although this is a fairly drastic mode of failure (the loss of a few links can dramatically reduce the network capacity \cite{Arulselvan_2009}), it is a failure that is easy to describe.

Although vulnerability to purely random failures have been extensively studied, Commander \emph{et al.} \cite{Commander_2007} observe that vulnerability to an adversarial jammer has received little attention.  Xu \emph{et al.} \cite{Xu_2005} was one of the few works discussing jammers, from both the perspective of the attacker and defender.

Although the use of graph theory is well-established in studying resource conflict in wireless networks \cite{Nandagopal_2000,Yang_2002,Jain_2003,Lee_2007}, we are inspired by the detailed survey \cite{Chiang_2007}, which states, ``Solving generalized NUM [network utility maximization] problems over networks with randomly varying topology remains an under-explored area with little known results on models or methodologies.''  Along these lines, graph theory plays a centrol role in identifying \emph{critical nodes} \cite{Arulselvan_2009}, those through which a substantial amount of traffic passes.  Because identifying these nodes is computationally difficult \cite{DiSumma_2011,dinh2012}, our (more na\"ive) definition of vulnerability as causing disconnectedness avoids a combinatorial explosion.

The tools of topology have been used more extensively in sensor networks.  Our simplicial complexes are inspired by the work of \cite{DeSilva_2005,Ghrist_2005,DeSilva_2007,Kanno_2009}.  Jamming and defense problems for sensor networks were discussed in \cite{Ahmed_2005,Xu_2006,Yao_2009}.  Finally, we observe that communication network capacity has been successfully studied in \cite{Ghrist_2011} using the tools of sheaf theory.

\section{Two simplicial models of wireless networks}

We will take the number of path components of a network as a measure of its health: fewer components is better.  (If the network is given a topological structure $X$, as will be done in the next section, it is well known that the zeroth homology group $H_0(X)$ specifies the path components of $X$.)  If there are multiple path components, there exist pairs of nodes which cannot communicate with each other, even through relays.  We also make the following \emph{single channel assumption}: if a link connected to a node is jammed, then that node cannot receive transmissions from \emph{any} other node.  

\begin{df}
A wireless network \emph{vulnerability} is its susceptibility to becoming disconnected when a single source of interference is present.
\end{df}

This article presents techniques for identifying these vulnerabilities within a simplicial model of the network.  We begin with a few preliminary definitions that are relevant to simplicial complexes. 

\begin{df}
An \emph{abstract simplicial complex} $X$ on a set $A$ is a collection of ordered subsets of $A$ that is closed under the operation of taking subsets.  
\begin{itemize} 
\item We call each element of $X$ a \emph{cell}.  A cell with $k+1$ elements is called a $k$-cell, though we usually call a $0$-cell a \emph{vertex} and a $1$-cell an \emph{edge}.  

\item If $a,b$ are cells with $a \subset b$, we say that $a$ is a \emph{face} of $b$, and that $b$ is a \emph{coface} of $a$.  A cell of $X$ that has no cofaces is called a \emph{facet}.  

\item The \emph{closure} $\cl S$ of a set $S$ of cells in $X$ is the smallest abstract simplicial complex that contains $S$.  Observe that $\cl S \subseteq X$.

\item The \emph{star} $\st S$ of a set $S$ of cells in $X$ is the set of all cells that have at least one face in $S$.  Although $\st S \subseteq X$, it is typically not an abstract simplicial complex.
\end{itemize}
\end{df}

Suppose a radio network is active in a spatial region $R$, which we model as being a topological space.  Let the network consist of a collection of nodes $N=\{n_i\} \subset R$ that communicate through a single-channel, broadcast resource.  An open set $U_i$ is associated to each node $n_i$ that represents its \emph{transmitter coverage region}.  (See Figure \ref{fig:ilcomplex}.) For each node $n_i$, a continuous function $s_i : U_i \to \mathbb{R}$ represents its \emph{signal level} at each point in $U_i$.  For simplicity, we assume that there is a global threshold $T$ for accurately decoding the transmission from any node.

Two simplicial complexes are useful for describing the network: the \emph{interference complex} that describes interference received by mobile terminals communicating with each node, and the \emph{link complex} that describes the possible communication links between nodes.

\begin{df}
The \emph{interference complex} $I=I(N,U,s,T)$ consists of all subsets of $N$ of the form $\{i_1, \dotsc, i_n\}$ for which $U_{i_1} \cap \dotsb \cap U_{i_n}$ contains a point $x$ for which $s_{i_k}(x) > T$ for all $k=1, \dotsb n$. 
\end{df}
Briefly, the interference complex describes the lists of transmitters that when transmitting will result in at least one mobile receiver location receiving multiple signals simultaneously.  (Without the constraint on the decoding threshold, the interference complex reduces to the well-known \v{C}ech complex \cite{Hatcher_2002}.)

\begin{prop}
Each facet of the interference complex corresponds to a maximal collection of nodes that mutually interfere.
\end{prop}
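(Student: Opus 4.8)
The plan is to show that the proposition is essentially an unwinding of the definitions, once the meaning of \emph{mutually interfere} is fixed. First I would make that notion precise: I will say that a collection $S = \{i_1, \dots, i_n\} \subseteq N$ \emph{mutually interferes} when there is a single point $x$ (a potential receiver location) lying in $U_{i_1} \cap \dots \cap U_{i_n}$ at which every signal exceeds the threshold, i.e. $s_{i_k}(x) > T$ for all $k$. This is precisely the condition appearing in the definition of the interference complex, so the first step records the tautology that $S$ is a cell of $I$ if and only if the nodes of $S$ mutually interfere.

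Next I would verify that this common-point condition is closed under passing to subsets, which is what makes $I$ a genuine abstract simplicial complex and is needed before the notion of facet even applies: if $x$ witnesses mutual interference of $S$, then the same $x$ lies in the intersection of the (fewer) coverage regions of any subset $S' \subseteq S$ and still exceeds $T$ there, so $S'$ also mutually interferes. Hence membership in $I$ is downward closed.

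The core of the argument is then a two-way matching of facets with maximal interfering collections. By definition, a cell $S \in I$ is a facet exactly when it has no coface, that is, no cell $S' \in I$ with $S \subsetneq S'$. Translating through the first step, this says that $S$ mutually interferes but no strict superset of $S$ mutually interferes, which is exactly the statement that $S$ is a \emph{maximal} collection of mutually interfering nodes. Running the equivalence in both directions gives the claimed correspondence.

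The main obstacle is not computational but definitional: the proposition is only true under the strong, single-common-point reading of ``mutually interfere.'' A tempting weaker reading is \emph{pairwise} interference, where every pair shares a point above threshold; under that reading maximal pairwise-interfering sets need not be facets of $I$, since a point common to \emph{all} transmitters need not exist even when every pair shares one, so the correspondence would fail. Thus the delicate step is to argue that the physically meaningful notion of interference among a set of transmitters is the existence of a single receiver location overwhelmed by all of them simultaneously, which is exactly what $I$ encodes; after that reading is justified, the result follows immediately.
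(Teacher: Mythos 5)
Your proof is correct and follows essentially the same route as the paper's: both arguments simply unwind the definition of the interference complex to identify cells with jointly interfering collections and facets with maximal ones. You are in fact somewhat more careful than the paper's own proof, which never states the converse direction and even slips into the weaker pairwise reading (``if any two nodes in $c$ transmit simultaneously, they will interfere at $x$''), whereas you correctly flag that the correspondence only holds under the single-common-witness-point reading of ``mutually interfere.''
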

\begin{proof}
Let $c$ be a cell of the interference complex.  Then $c$ is a collection of nodes whose footprints have a nontrivial intersection.  By construction, the decoding threshold is exceeded for all nodes at some point $x$ in this intersection.  Therefore, if any two nodes in $c$ transmit simultaneously, they will interfere at $x$.  If $c$ is a facet, it is contained in no larger cell, so it is clearly maximal.
\end{proof}

\begin{df}
The \emph{link graph} is the following collection of subsets of $N$:
\begin{enumerate}
\item $\{n_i\} \in N$ for each node $n_i$, and
\item $\{n_i,n_j\}\in N$ if $s_i(n_j) > T$ and $s_j(n_i) >T$.
\end{enumerate}
The \emph{link complex} $L=L(N,U,s,T)$ is the clique complex of the link graph, which means that it contains all elements of the form $\{i_1,\dotsc,i_n\}$ whenever this set is a clique in the link graph.
\end{df}

\begin{prop}
Each facet in the link complex is a maximal set of nodes that can communicate directly with one another (with only one transmitting at a time).
\end{prop}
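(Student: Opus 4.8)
The plan is to mirror the proof just given for the interference complex: unwind the clique-complex definition, read off the communication interpretation of a cell, and then deduce maximality directly from the facet condition. First I would take an arbitrary cell $c = \{i_1, \dots, i_k\}$ of $L$. By the definition of $L$ as the clique complex of the link graph, $c$ is a cell precisely when $c$ is a clique, i.e. every pair $\{n_i, n_j\}$ with $i,j \in c$ is an edge of the link graph. Recalling the edge condition, each such pair satisfies $s_i(n_j) > T$ and $s_j(n_i) > T$, which is exactly the statement that $n_i$ and $n_j$ can each decode the other's transmission, so they can communicate directly. Thus a cell of $L$ is exactly a set of nodes that are pairwise able to communicate directly.

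The next step addresses the clause ``with only one transmitting at a time.'' Since all nodes share the single broadcast channel, the single channel assumption says that simultaneous transmission by two members of $c$ would prevent reception; hence the direct communication guaranteed by the pairwise link condition is realized only when the transmissions are not concurrent. This caveat is therefore forced by the channel model rather than by the combinatorics of the complex, and I would state it as such.

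Finally, I would deduce maximality from the facet property exactly as before. If $c$ is a facet, then by definition it has no coface, so no cell of $L$ strictly contains it; equivalently, there is no clique of the link graph strictly larger than $c$. Any node $n_\ell \notin c$ that could be adjoined while preserving pairwise direct communication would make $c \cup \{n_\ell\}$ a clique, hence a coface of $c$, contradicting that $c$ is a facet. Therefore $c$ is a maximal set of mutually directly-communicating nodes, as claimed.

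The main subtlety, and the place where this argument differs genuinely from the interference-complex case, is that the link complex is assembled purely from pairwise (edge) data, with no common witness point analogous to the single $x$ used for the interference complex. I would make a point of emphasizing that ``communicate directly with one another'' is an inherently pairwise notion, so the clique condition --- all pairs linked --- is precisely the right hypothesis and no global intersection is needed. If a full biconditional characterization were desired (every maximal clique is a facet, not just the converse), the reverse direction follows symmetrically, since a maximal clique admits no strictly larger clique and hence no coface.
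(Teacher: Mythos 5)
Your proposal is correct and follows essentially the same route as the paper: unwind the clique-complex definition to see that a cell is a set of pairwise-linked (hence pairwise-communicating) nodes, then read maximality off the facet condition. In fact you are somewhat more complete than the paper's own proof, which stops after establishing pairwise communicability and leaves the maximality step (and the single-channel caveat) implicit.
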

\begin{proof}
Let $c$ be a cell of the link complex.  By definition, for each pair of nodes, $i,j\in c$ implies that $s_i(n_j) > T$ and $s_j(n_i) >T$.  Therefore, $i$ and $j$ can communicate with one another.  
\end{proof}

\begin{cor}
Facets of the interfence and link complexes represent common broadcast resources.
\end{cor}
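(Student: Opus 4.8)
The plan is to derive this corollary directly from the two preceding propositions, since it is essentially an interpretation of their combined content under the single channel assumption. First I would recall that, by those propositions, a facet of the interference complex is a maximal set of nodes that mutually interfere, while a facet of the link complex is a maximal set of nodes that can communicate directly with one another when only one transmits at a time. Both statements describe a maximal collection of nodes whose transmission activity is coupled through a single shared channel, so the corollary will follow by identifying each such maximal collection with a shared resource.

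Next I would invoke the single channel assumption stated earlier: because every node broadcasts into one channel, any set of nodes that interferes, or that can mutually communicate, is exactly a set whose members must coordinate access to that one channel, with only one of them transmitting at a time without causing a collision. The interference complex records this constraint from the receivers' standpoint, since simultaneous transmissions collide at some location $x$ where all signal levels exceed $T$, while the link complex records the same constraint from the transmitters' standpoint, since mutual reachability presupposes time-sharing of the channel. In either model the underlying object being described is the single broadcast channel shared by the nodes in the cell.

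Finally I would use maximality to finish. Because a facet is contained in no larger cell, it captures the \emph{entire} set of nodes contending for a given channel rather than a proper subset; a non-facet cell corresponds only to part of such a resource-sharing group. Hence it is exactly the facets of the two complexes that represent common broadcast resources, which is the assertion of the corollary.

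The main obstacle here is not technical but definitional: the phrase ``common broadcast resource'' is used informally and is never given a precise definition in the text. The substantive step of the proof is therefore to make explicit the identification between a maximal mutually-interfering (respectively mutually-communicating) set and a shared channel, leaning on the single channel assumption to justify that such a maximal set is precisely what it means for a group of nodes to share one broadcast resource. Once that identification is granted, the corollary is an immediate consequence of the two propositions and requires no further computation.
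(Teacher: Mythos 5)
Your proposal is correct and matches the paper's intent: the paper states this corollary with no proof at all, treating it as an immediate interpretive consequence of the two preceding propositions, which is exactly how you derive it. Your observation that ``common broadcast resource'' is informal and that the only substantive step is identifying a maximal mutually-interfering (resp.\ mutually-communicating) set with a shared channel is an accurate diagnosis of why the paper omits a proof.
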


\begin{figure}
\begin{center}
\includegraphics[width=3in]{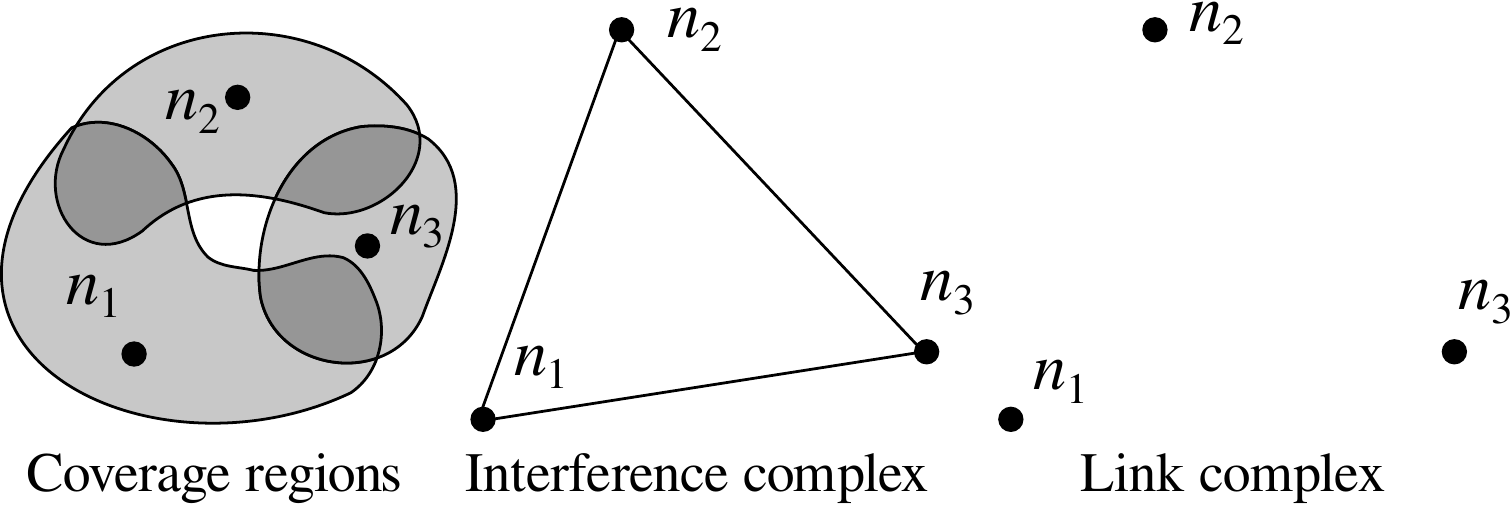}
\caption{Transmitter coverage regions (left), the associated interference complex (middle), and link complex (right)}
\label{fig:ilcomplex}
\end{center}
\end{figure}

\begin{eg}
Figure \ref{fig:ilcomplex} shows a network with three nodes.  The coverage regions are shown at left for a given threshold $T$.  Since there are nonempty pairwise intersections between the coverage regions, but there is no common point of intersection for all three nodes, the interference complex (middle) contains no $2$-cells.  However, since no pair of nodes can actually communicate, the link complex consists of three isolated vertices.
\end{eg}

\section{Interference from a transmission}

The interference caused by a transmission impacts the usability of the network outside of the transmission's immediate vicinity.  This section builds a consistent definition of the \emph{region of influence} of a node or a link within the network.  To justify this definition, we use a local model that describes which configurations of nodes can transmit simultaneously.

\begin{df}
Suppose that $X$ is a simplicial complex (such as an interference or link complex) whose set of vertices is $N$.  Consider the following assignment $\shf{T}$ of additional information to capture which nodes are transmitting and decodable:
\begin{enumerate}
\item To each cell $c\in X$, assign the set
\begin{eqnarray*}
\shf{T}(c)&=&\{n \in N : \text{there exists a cell } d\in X \text{ with }\\
&& c \subset d \text{ and } n\in d\}\cup\{\perp\}
\end{eqnarray*} 
of nodes that have a coface in common with $c$, along with the symbol $\perp$.  We call $\shf{T}(c)$ the \emph{stalk} of $\shf{T}$ at $c$.
\item To each pair $c \subset d$ of cells, assign the \emph{restriction function}
\begin{equation*}
\shf{T}(c\subset d)(n) =
\begin{cases}
n&\text{if }n\in \shf{T}(d)\\
\perp&\text{otherwise}\\
\end{cases}
\end{equation*}
\end{enumerate}
\end{df}

For instance, if $c \in X$ is a cell of a link complex, $\shf{T}(c)$ specifies which nearby node is transmitting and decodable, or $\perp$ if none are.  The restriction functions relate the decodable transmitting nodes at the nodes to which nodes are decodable along an attached wireless link.  Similarly, if $c \in X$ is a cell of an interference complex, $\shf{T}(c)$ also specifies which nearby node is transmitting, and effectively locks out any interfering transmissions from other nodes.  

\begin{df}
The assignment $\shf{T}$ is called the \emph{transmission sheaf} and is an example of a \emph{cellular sheaf} -- a mathematical object that stores local data.  The theory of sheaves explains how to extract consistent information, which in the case of networks consists of nodes that whose transmissions do not interfere with one another.

A \emph{section} of $\shf{T}$ supported on a subset $Y \subseteq X$ is an assignment $s:Y \to N$ so that for each $c \subset d$ in $Y$, $s(c) \in \shf{T}(c)$ and $\shf{T}(c\subset d)\left( s(c)\right) = s(d)$.  A section supported on $X$ is called a \emph{global section}.  
\end{df}

Specifically, global sections are complete lists of nodes that can be transmitting without interference.  

\begin{figure}
\begin{center}
\includegraphics[width=3in]{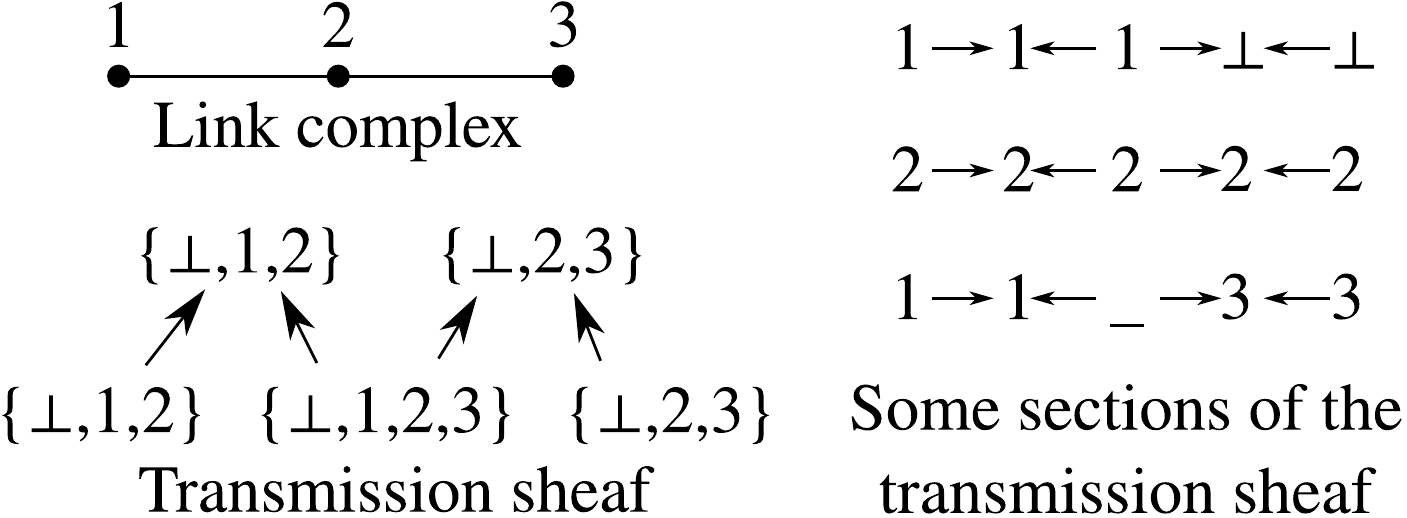}
\caption{A link complex (left top), sheaf $\shf{T}$ (left bottom), and three sections (right).  The restrictions are shown with arrows.  global section when node 1 transmits (right top), global section when node 2 transmits (right middle), and a local section with nodes 1 and 3 attempting to transmit, interfering at node 2 (right bottom)}
\label{fig:linesec}
\end{center}
\end{figure}

\begin{eg}
Figure \ref{fig:linesec} shows a network with three nodes, labeled 1, 2, and 3.  When node 1 transmits, node 2 receives.  Because node 2 is busy, its link to node 3 must remain inactive (right top).  When node 2 transmits, both nodes 1 and 3 receive (right middle).  The right bottom diagram shows a local section that cannot be extended to the cell marked with a blank.  This corresponds to the situation where nodes 1 and 3 attempt to transmit but instead cause interference at node 2. 
\end{eg}

\begin{eg}
Observe that in either of the simplicial complex models shown in Figure \ref{fig:ilcomplex}, only one of the nodes may transmit at a time.
\end{eg}

\begin{df}
Suppose that $s$ is a global section of $\shf{T}$.  The \emph{active region} associated to a node $n\in X$ in $s$ is the set
\begin{equation*}
\active (s,n) = \{a \in X : s(a)=n\},
\end{equation*}
which is the set of all nodes that are currently waiting on $n$ to finish transmitting.
\end{df}

\begin{lem}
\label{lem:act}
The active region of a node is a connected, closed subcomplex of $X$ that contains $n$.
\end{lem}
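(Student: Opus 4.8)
The plan is to verify the three asserted properties in turn: that $\active(s,n)$ is closed under taking faces (hence a genuine subcomplex), that it contains the vertex $\{n\}$, and that it is connected. The engine behind all three is the rigid behavior of the restriction maps of $\shf{T}$: along any incidence $c \subset d$ the value of a section is either preserved or collapses to $\perp$, and it is preserved exactly when the node in question still lies in $\shf{T}(d)$. Throughout I identify the node $n$ with its vertex $\{n\} \in X$. When $n$ is not transmitting in $s$ the active region is empty; the substantive content concerns a transmitting node, i.e. the case $\active(s,n) \neq \emptyset$, for which I show the region is nonempty and contains $\{n\}$.

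First I would establish closedness. Suppose $a \in \active(s,n)$, so $s(a) = n$, and let $b \subseteq a$ be any face. The section condition applied to the incidence $b \subset a$ gives $\shf{T}(b\subset a)\left(s(b)\right) = s(a) = n$. Since $n \neq \perp$, the definition of the restriction map forces $s(b) = n$. Hence $b \in \active(s,n)$, so the active region is downward closed and is therefore a subcomplex of $X$.

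The containment of $n$ and the connectedness I would handle together, via the key observation that every active cell sits inside an active closed simplex passing through $\{n\}$. Fix $a \in \active(s,n)$. Because $s(a) = n$ lies in $\shf{T}(a)$, the definition of the stalk yields a coface $d_a \supseteq a$ with $n \in d_a$. Restricting upward from $a$ to $d_a$ and noting $n \in \shf{T}(d_a)$ (as $n \in d_a \subseteq d_a$), the map $\shf{T}(a \subset d_a)$ fixes $n$, so $s(d_a) = n$ and $d_a \in \active(s,n)$. By the closedness just proved, the entire closed simplex $\cl\{d_a\}$ lies in $\active(s,n)$; this simplex is connected and contains both $a$ and the vertex $\{n\}$. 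In particular $\{n\} \in \active(s,n)$, which gives the containment claim. Moreover $\active(s,n) = \bigcup_{a} \cl\{d_a\}$ is exhibited as a union of connected closed simplices all sharing the common vertex $\{n\}$, and a union of connected sets with a common point is connected.

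I expect the connectedness step to be the main obstacle, and more precisely the sub-claim that each active cell admits an active coface containing $n$: this is where the stalk definition (existence of a coface realizing the node) must be combined with the restriction condition (preservation of the value up to that coface). Once such a coface is produced, closedness does the rest, and both the topological connectedness and the membership of $\{n\}$ fall out simultaneously. A minor point to be careful about is the convention that a section takes a value in the stalk at every cell, which is precisely what licenses writing $n \in \shf{T}(a)$ for an active cell $a$.
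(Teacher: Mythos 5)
Your proof is correct and follows essentially the same route as the paper's: closedness via the fact that a non-$\perp$ value of a section can only be preserved (not created) along a restriction, and then membership of $n$ and connectedness simultaneously, by producing for each active cell a common active coface with $n$ and invoking closedness. Your treatment is slightly more careful than the paper's in making explicit the empty-active-region caveat and in spelling out the union-of-simplices-through-$\{n\}$ argument for connectedness, but these are refinements of the same proof, not a different one.
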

\begin{proof}
Consider a cell $c \in \active (s,n)$.  If $c$ is not a vertex, then there exists a $b \subset c$; we must show that $b\in \active (s,n)$.  Since $s$ is a global section $\shf{T}(b\subset c)s(b)=s(c)=n$.  Because $s(c) \not= \perp$, the definition of the restriction function $\shf{T}(b\subset c)$ implies that $s(b)=n$.  Thus $b\in \active (s,n)$ so $\active (s,n)$ is closed.

If $c \in \active (s,n)$, then $c$ and $n$ have a coface $d$ in common.  Since $s$ is a global section $s(d)=\shf{T}(c \subset d)s(c)=\shf{T}(c \subset d)n=n$.  Thus, $n \in \active(s,n)$, because $n$ is a face of $d$ and $\active (s,n)$ is closed.  This also shows that every cell in $\active (s,n)$ is connected to $n$.
\end{proof}

\begin{lem}
\label{lem:stactive}
The star over the active region of a node does not intersect the active region of any other node.
\end{lem}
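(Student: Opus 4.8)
The plan is to unwind the two set-theoretic descriptions and reduce the claim to a single invocation of the definition of the restriction function. Stated precisely, I want to show that for any node $m \neq n$,
\[
\st\bigl(\active(s,n)\bigr)\cap\active(s,m)=\emptyset,
\]
and I would argue this by contradiction: suppose some cell $c$ lies in both sets, and derive an impossibility.

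First I would expand the two memberships. Membership $c \in \st(\active(s,n))$ means, by the definition of star, that $c$ has a face lying in $\active(s,n)$; that is, there is a cell $b \subseteq c$ with $s(b)=n$. Membership $c \in \active(s,m)$ means simply $s(c)=m$. Next I would invoke the compatibility condition satisfied by the global section $s$ along the pair $b \subset c$, namely $\shf{T}(b \subset c)\bigl(s(b)\bigr)=s(c)$, which here reads $\shf{T}(b \subset c)(n)=m$. The crux is then the two-valued nature of the restriction function: by its very definition, $\shf{T}(b \subset c)(n)$ equals $n$ when $n \in \shf{T}(c)$ and equals $\perp$ otherwise, so in all cases $\shf{T}(b \subset c)(n)\in\{n,\perp\}$. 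Hence $m\in\{n,\perp\}$, which contradicts both $m \neq n$ and the fact that $m$ is a genuine node and therefore $m \neq \perp$. This contradiction forces the intersection to be empty.

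The heart of the argument — and essentially its only substantive step — is the observation that the restriction function can never convert the label $n$ into a \emph{different} node: it can only preserve $n$ or collapse it to $\perp$. So the main thing to get right is the bookkeeping, namely applying the section condition in the correct direction (from the smaller cell $b$ up to $c$) and confirming that a node label is never $\perp$. A minor point worth checking is the convention for star membership in the degenerate case $b=c$, but there $s(c)=s(b)=n\neq m$ immediately, so no separate treatment is needed. I do not anticipate a genuine obstacle here; the lemma follows directly from the restriction-based reasoning used in Lemma~\ref{lem:act} together with the fact that $\shf{T}(b \subset c)$ takes only the two values $n$ and $\perp$.
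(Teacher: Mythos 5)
Your proposal is correct and follows essentially the same route as the paper: both arguments take a cell $c$ in the star with a face $b$ satisfying $s(b)=n$, apply the section condition $\shf{T}(b\subset c)(s(b))=s(c)$, and conclude from the two-valued nature of the restriction function that $s(c)$ is either $n$ or $\perp$, hence never another node $m$. The paper states this as a direct derivation that $s(c)=\perp$ when $c\notin\active(s,n)$, while you frame it as a contradiction, but the substance is identical.
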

\begin{proof}
Let $c\in \st \active (s,n)$.  Without loss of generality, assume that $c \notin \active (s,n)$.  Therefore, there is a $b \in \active (s,n)$ with $b\subset c$.  By the definition of the restriction function $\shf{T}(b\subset c)$, the assumption that $c\notin \active (s,n)$, and the fact that $s$ is a global section, $s(c)$ must be $\perp$.
\end{proof}

\begin{lem}
The active region of a node is independent of the global section.  More precisely, if $r$ and $s$ are global sections of $\shf{T}$ and the active regions associated to $n \in X$ are nonempty in both, then $\active (s,n)=\active (r,n)$.
\end{lem}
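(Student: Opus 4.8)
The plan is to prove the two inclusions $\active(s,n)\subseteq\active(r,n)$ and $\active(r,n)\subseteq\active(s,n)$ separately; since the hypotheses are symmetric in $r$ and $s$, it suffices to establish one of them. First I would record the one and only consequence of the nonemptiness hypothesis: by Lemma \ref{lem:act} each nonempty active region contains the vertex $n$, so $s(n)=r(n)=n$. This furnishes a single cell on which the two sections are already known to agree, and everything else will be forced from it.

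The observation I would exploit is that the restriction functions of $\shf{T}$ are extremely rigid: along any pair $c\subset d$, a section value is either carried across unchanged (when the node still lies in $\shf{T}(d)$) or collapsed to $\perp$. Consequently, whether a coface of an active cell is active is dictated by the combinatorics of $X$ alone---namely by whether $n\in\shf{T}(d)$---and not by the particular section chosen. I would isolate this as two short facts, each immediate from the section condition exactly as in the proofs of Lemmas \ref{lem:act} and \ref{lem:stactive}: (i) \emph{downward determinacy}, if a section takes the value $n$ on $d$ and $c\subset d$, then it takes the value $n$ on $c$; and (ii) \emph{upward determinacy}, if a section takes the value $n$ on a cell, then on any coface $d$ it takes the value $n$ precisely when $n\in\shf{T}(d)$, and $\perp$ otherwise.

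With these in hand the argument is a single hop, not an induction, because every active cell already shares a coface with $n$. Given $c\in\active(s,n)$, the definition of the stalk together with $s(c)=n\in\shf{T}(c)$ produces a cell $d$ with $c\subset d$ and $n\in d$; thus $d$ is a common coface of $c$ and of the vertex $n$, and in particular $n\in\shf{T}(d)$. Applying upward determinacy to $r$ along $\{n\}\subset d$ (using $r(n)=n$) yields $r(d)=n$, and then applying downward determinacy to $r$ along $c\subset d$ yields $r(c)=n$, that is, $c\in\active(r,n)$. Interchanging the roles of $r$ and $s$ gives the reverse inclusion, hence equality.

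I expect the only real subtlety to be bookkeeping about the symbol $\perp$ and the orientation of the restriction maps, and confirming that $n\in\shf{T}(d)$ is genuinely a property of $X$ (it follows from $n\in d$) rather than of the section. The heart of the matter is the rigidity of $\shf{T}$: since the value $n$ need only be transported up to a common coface and back down, no hypothesis beyond agreement at the single vertex $n$ is required, which is precisely what nonemptiness of both active regions supplies.
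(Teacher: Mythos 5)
Your proposal is correct and follows essentially the same route as the paper: establish $r(n)=s(n)=n$ from Lemma \ref{lem:act}, use $s(c)=n\in\shf{T}(c)$ to produce a common coface $d$ of $c$ and $n$, push the value $n$ up from $n$ to $d$ along the section condition for $r$, and then pull it back down to $c$ via closedness of $\active(r,n)$. Your explicit isolation of ``upward/downward determinacy'' and the remark that $n\in\shf{T}(d)$ is forced by $n\in d$ merely make transparent the steps the paper leaves implicit.
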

\begin{proof}
Without loss of generality, we need only show that $\active (s,n) \subseteq \active (r,n)$.  If $c \in\active(s,n)$, there must be a cell $d\in X$ that has both $n$ and $c$ as faces.  Now $s(n)=r(n)=n$ by Lemma \ref{lem:act}, which means that $r(d)=\shf{T}(n \subset d)r(n)=n$.  Therefore, since $\active (r,n)$ is closed, this implies that $c \in \active(r,n)$.  
\end{proof}

\begin{df}
Because of the Lemmas, we call the star over an active region associated to a node $n$ the \emph{region of influence}.  The region of influence of a node can be written as a union
\begin{equation*}
\roi n = \bigcup_{f \in F} \st \cl f,
\end{equation*}
for a collection $F$ of facets, so we call the star over a closure of a facet the \emph{region of influence} of that facet.
\end{df}

\begin{cor}
\label{cor:unaffected}
The complement of the region of influence of a facet is a closed subcomplex.
\end{cor}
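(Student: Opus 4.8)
The plan is to reduce everything to a single structural observation about stars: the star of \emph{any} collection of cells is closed under passing to cofaces, so its complement is automatically closed under passing to faces, which is exactly the defining condition for a subcomplex. Since the region of influence of a facet $f$ is $\st \cl f$, the corollary will follow by applying this observation with $S = \cl f$.

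First I would establish the upward-closure of the star. Fix a cell $c \in \st \cl f$ and any coface $c'$ of $c$, so $c \subseteq c'$. By the definition of $\st$, the cell $c$ has a face $b \in \cl f$, i.e.\ $b \subseteq c$. Then $b \subseteq c \subseteq c'$, so $b$ is also a face of $c'$ that lies in $\cl f$, whence $c' \in \st \cl f$. This shows that $\st \cl f$ is closed under taking cofaces. Nothing in this step uses a special property of $\cl f$; the same argument works for the star of an arbitrary set of cells, which is really why the corollary is true.

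Next I would pass to the complement. Take $c \in X \setminus \st \cl f$ together with an arbitrary face $b \subseteq c$, and show $b \notin \st \cl f$. Indeed, if $b$ belonged to $\st \cl f$, then since $c$ is a coface of $b$, the preceding paragraph would force $c \in \st \cl f$, contradicting the choice of $c$. Hence $b \in X \setminus \st \cl f$. Thus the complement contains every face of each of its cells, so it is closed under the subset operation and is therefore an abstract simplicial complex, i.e.\ a closed subcomplex of $X$, as claimed.

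I do not anticipate a genuine obstacle: the entire content lies in recognizing that $\st$ is a monotone, upward-closed operation and that complementation interchanges upward- and downward-closure. The only point to state carefully is the convention for ``face'' --- whether $a$ being a face of $b$ means $a \subseteq b$ or $a \subsetneq b$. The chain-of-inclusions argument above is insensitive to this choice, so the conclusion holds under either reading, consistent with the usage in Lemma~\ref{lem:act}.
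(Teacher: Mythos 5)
Your proof is correct and is exactly the argument the paper leaves implicit (it states the corollary without proof): the star of any set of cells is upward-closed under taking cofaces, so its complement is downward-closed under taking faces, i.e.\ a closed subcomplex. Your added remark that the argument is insensitive to whether ``face'' is read strictly or non-strictly is a worthwhile clarification, since the chain $b \subseteq c \subseteq c'$ preserves the face relation under either convention.
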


\section{Global vulnerability assessment using persistent homology}

Based on the previous section, we assume that interfering with a link removes it and its region of influence from the network.  If the strength of the interference is unknown, it is more appropriate to assume that there is a filtration $\emptyset = L_0 \subseteq L_1 \subseteq \dotsc$ of open subcomplexes of the network $X$ that represent an increasingly large collection of links being disrupted by the interference.  At one end of the filtration $L_0 = \emptyset$, in which the network is unaffected.  At the other end, there can be severe disruptions of the network.

Observe that if $R_1 < R_2$, then $L_{R_1} \subseteq L_{R_2}$.  This induces a homomorphism $H_k(X \,\backslash \,\roi L_{R_2}) \to H_k(X \,\backslash \,\roi L_{R_1})$.  

\begin{df}
Because of this, we usually think of elements of $H_k(X \,\backslash\,\roi L_R)$ as being part of a larger \emph{$k$-th persistent homology group} $H_k(X \,\backslash\,\roi L_R)$ associated to the interference filtration $L_R$.  Following standard practice \cite{Edelsbrunner_2002,Ghrist_2008}, an element of $a\in H_k(X \,\backslash\,\roi L_R)$ is said to be \emph{born} at $R_1$ if $a\in H_k(X \,\backslash\,\roi L_{R_1})$ is nontrivial, and its preimage is trivial in all $H_k(X \,\backslash\,\roi L_{R_2})$ with $R_2<R_1$.  Conversely, it is said to \emph{die} at $R_1$ if its image in $H_k(X \,\backslash\,\roi L_{R_2})$ is zero for all $R_2 > R_1$.  The difference between its birth and death is called its \emph{age}.  
\end{df}

Elements of the persistent homology group with larger ages represent ``more significant'' topological features in the data, because they are more robust to perturbations \cite{CohenSteiner_2007}.

\begin{figure}
\begin{center}
\includegraphics[width=3in]{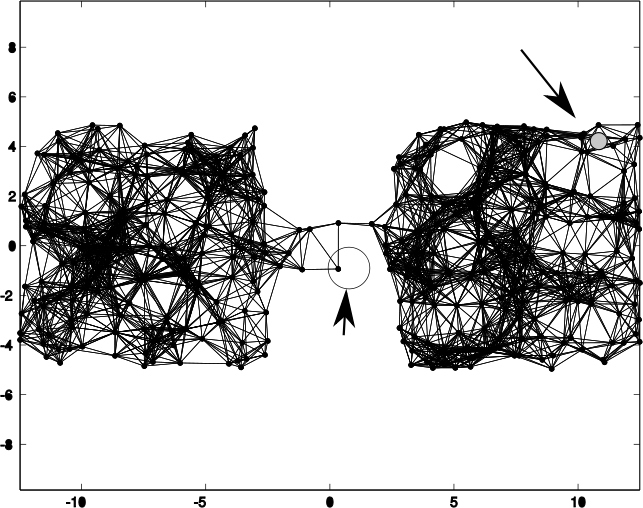}
\caption{A simulated network with two potential inferers marked with arrows: (1) a large empty circle, (2) a solid circle}
\label{fig:netmodel}
\end{center}
\end{figure}

We considered the link complex of a random network $N\subset \mathbb{R}^2$ (Figure \ref{fig:netmodel}) and two interference sources located at $a_1,a_2 \in \mathbb{R}^2$.  To simulate varying levels of interference, the filtration $L_{R,i} = \{n \in N : d(n,a_i) < R \}$ was computed according to the Euclidean distance function $d$.  Given this information, Perseus \cite{Nanda_2013} was used to compute the $0$-th persistent homology of the remaining network for each of the two jammers.  This software decomposes the persistent homology group into its set of generators, and displays the birth and death times corresponding to each.  

Examination of the results (Figure \ref{fig:ph_diagram}) shows that the number of significant generators (farther from the diagonal) of $H_0(X \,\backslash\,\roi L_R)$ depends strongly on the position of the jammer.  In the case where the jammer is located near the narrow portion of the network at the center, there is a significant persistent generator.  Communication from one half of the network to the other is impossible when the jammer is active.  For the other case, when the jammer is located away from the narrow portion of the network, there are no significant generators.  For a large range of interference strengths, the network remains connected.  Because of this, the presence of \emph{more and older persistent generators indicates greater vulnerability to that particular interference source}.

\begin{figure}
\begin{center}
\includegraphics[width=3in]{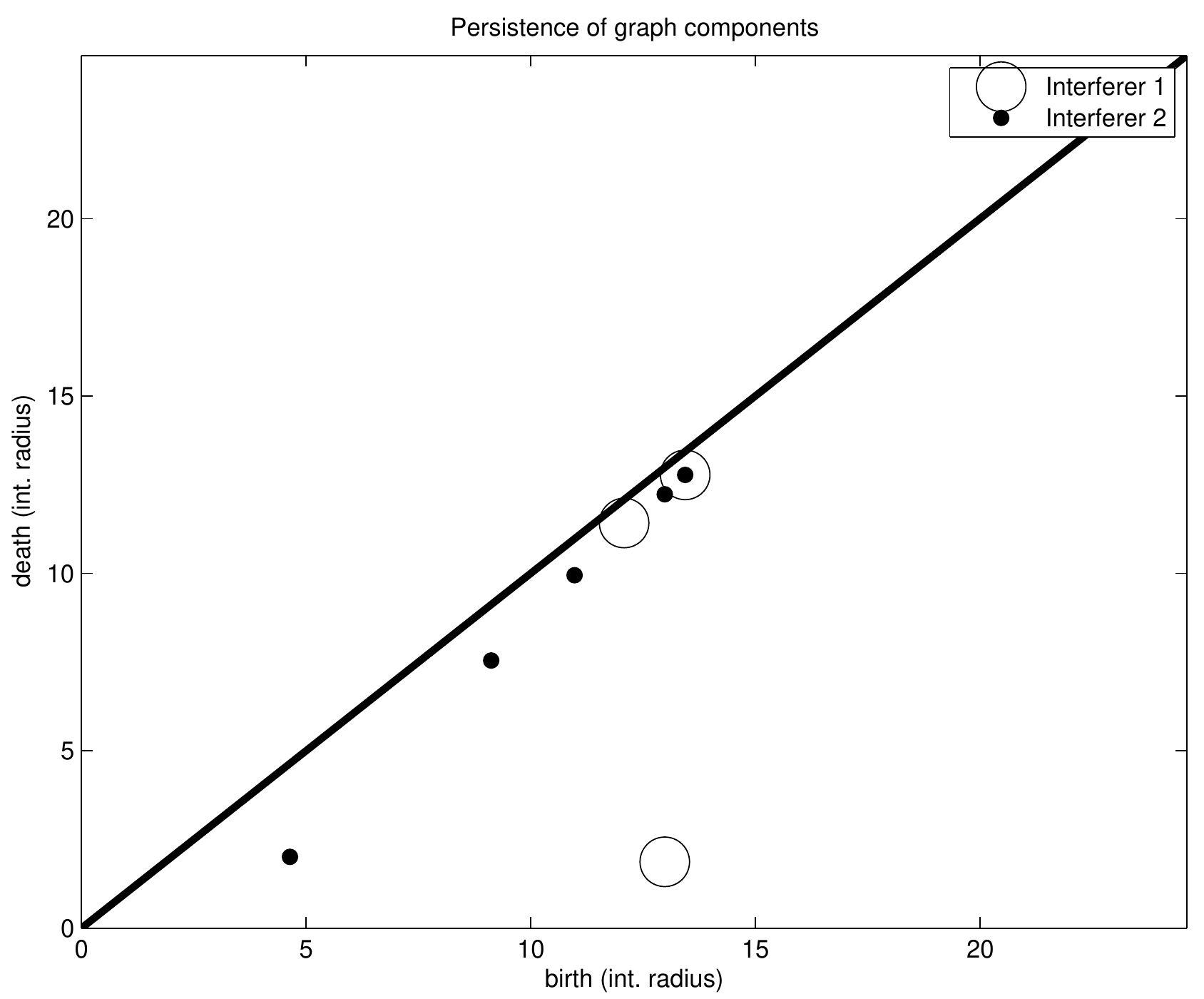}
\caption{Persistence diagram (horizontal=birth radius, vertical=death radius) for connected components associated to the two jammers.  Larger distance from the diagonal signifies the network is more vulnerable to a jammer}
\label{fig:ph_diagram}
\end{center}
\end{figure}

\section{Local vulnerability assesment using relative homology}

The single channel assumption means that an attack on a facet $L$ removes its region of influence from the network.  When this occurs, the nodes that are faces of this facet cannot communicate, but other portions of the network may become disconnected, too.  This section develops a precise, theoretical bound on the number of components a network is split into under the influence of interference.

\begin{figure}
\begin{center}
\includegraphics[width=3in]{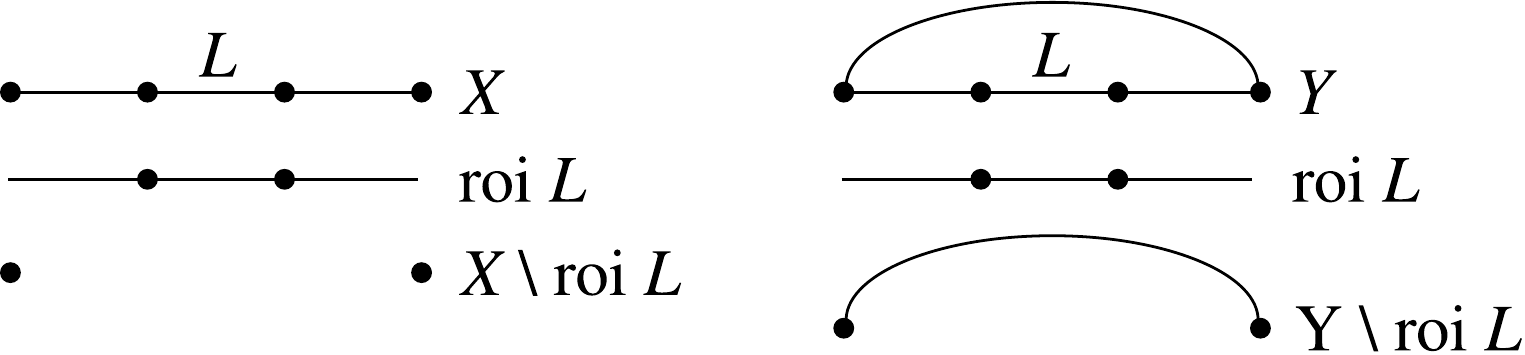}
\caption{Attacking a facet $L$ in a straight-line (left) network $X$ results in a network $X \,\backslash\,\roi L$ that is disconnected, but in a ring network $Y$ (right) does not disconnect the network}
\label{fig:attackline}
\end{center}
\end{figure}

\begin{eg}
A facet $L$ is marked in each of the two networks shown in Figure \ref{fig:attackline}.  If the region of influence of $L$ is removed from the network in the network $X$ at left, it becomes disconnected.  But if removed from the network $Y$ at right, it remains connected. 
\end{eg}

\begin{thm}
\label{thm:bound}
Suppose that $X$ is either a link or interference complex, and that $L$ is a facet of $X$.  If $X$ is connected and $\roi L$ is a proper subset of $X$, the number $\rank H_1(X,X\,\backslash\, \roi L) + 1$ is an upper bound on the number of connected components that an attack on $L$ cuts the network into.  When $H_1(X)$ is trivial, that upper bound is attained.
\end{thm}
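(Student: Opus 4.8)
The plan is to deduce the entire statement from the long exact sequence in homology of the pair $(X,A)$, where I abbreviate $A = X \,\backslash\, \roi L$. The quantity to control is the number of connected components of $A$, which equals $\rank H_0(A)$, and I want to bound it by $\rank H_1(X,A) + 1$. First I would check that $(X,A)$ is an admissible pair: by Corollary \ref{cor:unaffected} the complement $A$ of the region of influence of the facet $L$ is a closed subcomplex of $X$, so its relative homology is defined and the long exact sequence applies. Since $\roi L$ is a proper subset of $X$, the subcomplex $A$ is nonempty, which I would use to show that the inclusion-induced map $i_* : H_0(A) \to H_0(X)$ is surjective: any vertex of $A$ maps to a generator of the rank-one group $H_0(X)$, using that $X$ is connected so $\rank H_0(X) = 1$.

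The core of the argument is the tail
\[
H_1(X) \to H_1(X, A) \xrightarrow{\ \partial\ } H_0(A) \xrightarrow{\ i_*\ } H_0(X) \to H_0(X, A) \to 0 .
\]
Exactness at $H_0(A)$ gives $\ker i_* = \image \partial$, and since $\rank$ is additive across short exact sequences of finitely generated abelian groups, $\rank H_0(A) = \rank \ker i_* + \rank \image i_* = \rank(\image \partial) + 1$, where $\rank \image i_* = 1$ by the surjectivity established above. As $\rank(\image \partial) \le \rank H_1(X,A)$, this yields the upper bound $\rank H_0(A) \le \rank H_1(X,A) + 1$. For the equality case, if $H_1(X)$ is trivial then the image of $H_1(X) \to H_1(X,A)$ is zero, so exactness at $H_1(X,A)$ forces $\partial$ to be injective; hence $\rank(\image \partial) = \rank H_1(X,A)$ and the inequality becomes the asserted equality.

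I expect the only genuinely delicate point to be the bookkeeping around $i_*$, namely that $A$ must be nonempty for $i_*$ to be surjective onto $H_0(X)$ --- this is exactly where the hypothesis that $\roi L$ is a proper subset is used --- and that the rank counting is legitimate, which I would justify either by working with field coefficients (so $\rank$ is vector-space dimension and exactness gives dimension additivity directly) or by invoking additivity of rank in exact sequences of finitely generated abelian groups. Everything else is a mechanical consequence of exactness. Conceptually, the statement says that each extra component produced by deleting $\roi L$ corresponds to a piece of $X$ that was reconnected inside the ambient complex by chains which descend to relative $1$-cycles, and these are precisely what $H_1(X,A)$ counts; the vanishing of $H_1(X)$ removes the only obstruction to every such relative cycle genuinely separating a component.
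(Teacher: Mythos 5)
Your proof is correct and follows essentially the same route as the paper: the long exact sequence of the pair $(X, X\,\backslash\,\roi L)$, with closedness of the complement supplied by Corollary \ref{cor:unaffected} and the bound extracted by rank-counting at $H_0$. If anything your version is slightly cleaner, since you obtain injectivity of the connecting map directly from exactness when $H_1(X)=0$, whereas the paper detours through the unnecessary (and not generally valid) side-claim that $H_1(X)=0$ forces $H_1(X\,\backslash\,\roi L)=0$.
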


\begin{proof}
By Corollary \ref{cor:unaffected}, the portion of the network that is not directly disrupted by the attack (the complement of $L$'s region of influence) is closed.  Observe that this region is the simplicial complex $Y=X\,\backslash\, \roi L$.  Consider the long exact sequence associated to the pair $(X,Y)$, which is
\begin{equation*}
\xymatrix{
\dotsb \ar[r]& H_1(Y) \ar[r] & H_1(X) \ar[r] & H_1(X,Y) \ar[llld] \\
 H_0(Y) \ar[r]& H_0(X) \ar[r]& H_0(X,Y) \ar[r] & 0.
}
\end{equation*}
$H_k(X,Y) \cong \tilde{H}_k(X/Y)$ follows via \cite[Prop. 2.22]{Hatcher_2002}.  This means that $H_0(X,Y)=0$ and $H_0(X)\cong \mathbb{Z}$ because $X$ is connected.  Thus the long exact sequence reduces to
\begin{equation*}
\dotsb \to H_1(Y) \to H_1(X) \to H_1(X,Y) \overset{f}{\to} H_0(Y) \overset{g}{\to} \mathbb{Z} \to 0.
\end{equation*}
The number of connected components of the network during the attack is $\rank H_0(Y)$.  Because of the last term in the long exact sequence, this is at least 1.  If $H_1(X)=0$ then $H_1(Y)=0$ also, so the kernel of the homomorphism $g$ is precisely the image of the monomorphism $f$.  Hence $\rank H_1(X,Y) + 1 = \rank H_0(Y)$ as claimed.

On the other hand, if $H_1(X)$ is not trivial, then $H_1(Y)$ may or may not be trivial, depending on exactly where $L$ happens to fall.  By exactness,
\begin{eqnarray*}
\ker g &=& \rank H_0(Y) - 1 \\
           &=& \image f \\
           &=& \rank H_1(X,Y) - \ker f\\
\end{eqnarray*}
where the rank-nullity theorem for finitely generated abelian groups applies in the last equality. The kernel of $f$ may be as large as $\rank H_1(X)$, but it may be smaller.  Thus, we can claim only that
\begin{equation*}
\rank H_0(Y) \le \rank H_1(X,Y) + 1.
\end{equation*}
\end{proof}

Because of the excision property for relative homology, $H_1(X;X\,\backslash\,\roi L)$ only detects the immediate vicinity of the link, and is a generalization of vertex degree to higher dimensional complexes.  

\begin{eg}
If the simplicial complex for a network is a tree, then attacking a node with degree larger than 2 will result in a network with more than two connected components.
\end{eg}

\begin{figure}
\begin{center}
\includegraphics[width=3in]{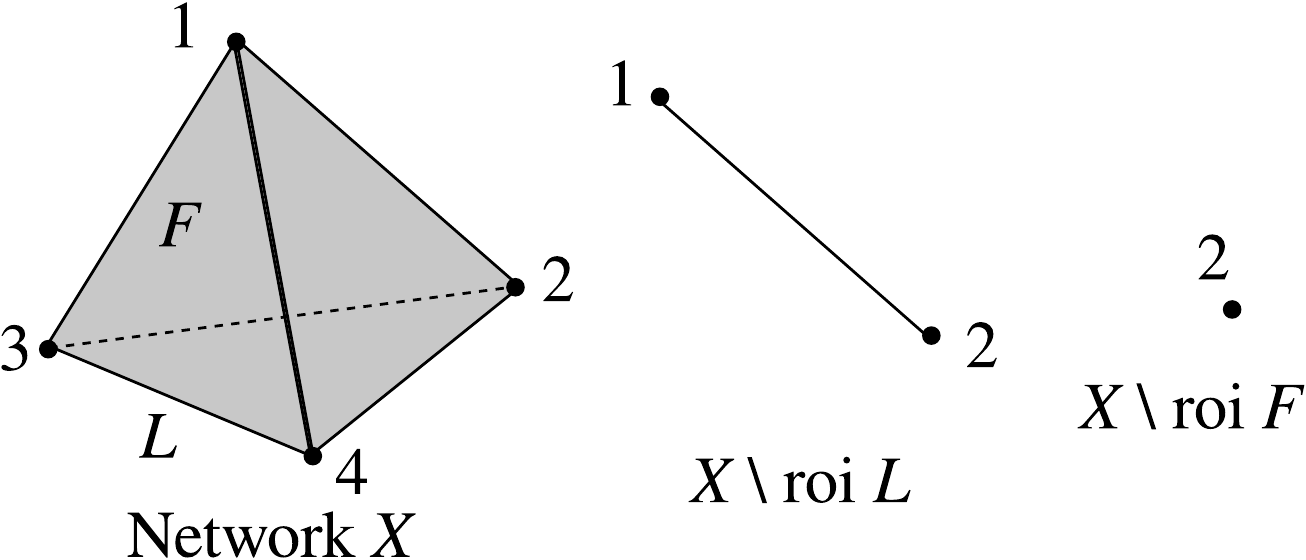}
\caption{An interference complex that is a hollow tetrahedron (left) for which $H_1(X,X\,\backslash\,\roi L)$ and $H_1(X,X\,\backslash\,\roi F)$ are trivial.}
\label{fig:tetra}
\end{center}
\end{figure}

\begin{eg}
Figure \ref{fig:tetra} shows an example of a network for which the interference complex is 2-dimensional.  In this case, $H_1(X,X\,\backslash\,\roi L)$ and $H_1(X,X\,\backslash\,\roi F)$ are trivial.  Although each vertex in the link graph has degree 3, Theorem \ref{thm:bound} implies that the removal of the region of influence for an edge or a face from the network results in a single connected component.
\end{eg}

\begin{eg}
The presence of loops farther from the link can provide back-up connections in the event of an attack, though these are not measured by $H_1(X;X\,\backslash\,\roi L)$.  Therefore, the presence of a nontrivial $H_1(X)$ appears to be protective (but is not a guarantee) of network robustness (see Figure \ref{fig:bubbletree}).
\end{eg}

\begin{figure}
\begin{center}
\includegraphics[width=3in]{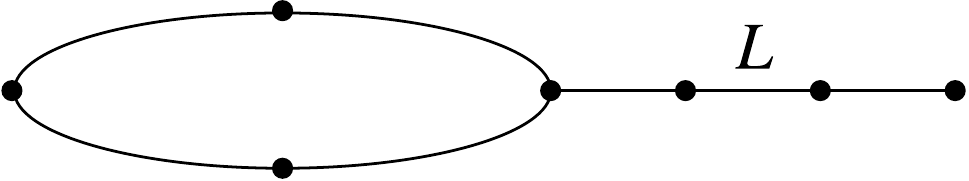}
\caption{Even though the graph for network $X$ has nontrivial $H_1(X)$, attacking facet $L$ results in a disconnected network}
\label{fig:bubbletree}
\end{center}
\end{figure}

\section{Next steps}
This article showed that a wireless communication network's vulnerability to jamming can be analyzed using tools from algebraic topology.  It showed that simplicial complexes, sheaves, and persistent homology provide differing kinds of information about the network.  Our simplicial complex model of a network can include higher dimensional data, but our analysis treated low-dimensional homological data exclusively.  Therefore, it is reasonable to continue our analysis into higher dimensions, aiming to assess decision thresholds and to study the higher homology groups themselves.  

The network vulnerability invariant from relative homology has been studied extensively in the context of \emph{local homology groups}.  However, this mathematical tool has not been applied to networks, so it is worthwhile to examine the impact of the other maps within the long exact sequence, to identify other useful invariants.  Finally, we expect to extend the formalism to timeseries data within a network intending to understanding protocol-level jamming.

\bibliographystyle{IEEEbib}
\bibliography{icassp2014_bib}

\end{document}